\documentclass{article}

\usepackage{amsmath}
\usepackage{amssymb}
\usepackage{amsthm}

\usepackage{Baskervaldx}
\usepackage[baskervaldx]{newtxmath}

\usepackage{enumitem}
\usepackage{graphicx}
\usepackage{subcaption}

\usepackage{mathrsfs}
\usepackage{scrextend}
\usepackage[scr=boondoxo,scrscaled=1.05]{mathalfa}
\usepackage{pgf}

\usepackage{adjustbox}

\KOMAoption{fontsize}{14pt}

\usepackage{sectsty}
\sectionfont{\normalsize}
\subsectionfont{\normalsize}
\usepackage{titlesec}

\usepackage{geometry}
\usepackage{xcolor}
\usepackage[colorlinks = true, citecolor = red]{hyperref}
\usepackage{pagecolor}

\newcommand{\reals}{\mathbb R}
\newcommand{\h}{H}
\newcommand{\scl}{\mathscr l}
\newcommand{\partition}{\mathscr P}
\DeclareMathOperator{\mesh}{mesh}

\newtheorem{theorem}{Theorem}[section]

\theoremstyle{definition} 
\newtheorem{property}{Property}

\title{How much work can you get\\
by removing weights from a piston?}

\author{%
\begin{tabular}{c}
\small Joshua Samani\\
\small Department of Physics and Astronomy\\
\small University of California, Los Angeles\\
\small \texttt{jsamani@physics.ucla.edu}
\end{tabular}
}

\date{\small\today}

\begin{document}

\maketitle

\begin{abstract}
In thermodynamics, reversible adiabatic expansion can be understood as a limit of stepwise irreversible processes.  We make this idea concrete by studying an ideal gas in an insulating cylinder with a frictionless piston supporting a load divided into $N$ blocks.  If blocks are removed one at a time, the resulting expansion is a stepwise, irreversible process, but we prove that for a fixed total load, the work done by the gas approaches the reversible limit as the largest block mass tends to zero.  On the way to this limit, an interesting work optimization question arises at finite $N$: how does the work done by the gas depend on the order and sizes of the removed blocks?  Guided by numerical experiments accessible to advanced undergraduates, we motivate and then prove general answers to these questions.  Our main finite-$N$ result proves that for fixed $N$, the optimal stepwise expansion corresponds to a geometric progression of equilibrium pressures, settling a conjecture previously made by Andresen, Berry, Nitzan, and Salamon.
\end{abstract}

\section{Introduction}

The reversible, adiabatic expansion of a gas is often introduced as an ideal limiting process in which the external pressure on the gas is changed infinitesimally slowly, so the gas remains arbitrarily close to equilibrium throughout.  Our primary goal in this paper is to make this limiting process computationally explicit with a simple model system.  This goal leads naturally to a number of interesting optimization questions about irreversible stepwise expansions with finite numbers of steps.  We explore answers to these questions with numerical experiments that can be used as exercises for advanced undergraduate students, allowing them to discover the main patterns for themselves.  We then state and prove mathematical theorems that formalize these patterns.

To make the approach to reversibility concrete, we follow Van Ness \cite{VanNess1983} and consider the stepwise adiabatic expansion of a gas trapped in a cylinder with a piston (Fig. \ref{step}).  
\begin{figure}[h!]
	\centering\includegraphics[width = 0.95\textwidth]{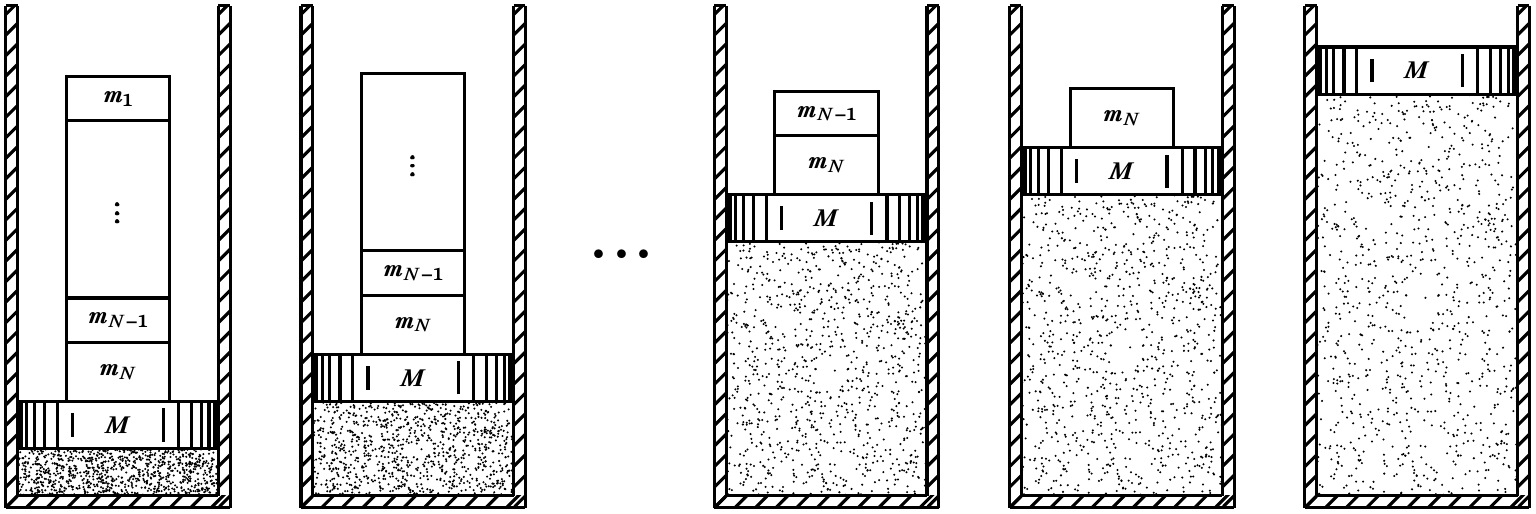}
	\caption{A concrete model for approaching reversible adiabatic expansion: a thermally insulated gas expands by lifting a frictionless piston as \(N\) stacked blocks are removed sequentially.}
	\label{step}
\end{figure}
Both the cylinder and piston are thermally insulating, and the piston slides without friction.  The piston has mass $M$, and on top of it sits a block of total mass $m$ partitioned into a stack of smaller blocks $m_1, \dots, m_N$.  The top block $m_1$ is removed quickly.  This causes the downward force on the piston, and hence the external pressure on the gas, to drop abruptly, so the gas expands.  After some transient oscillations, the piston relaxes to a new equilibrium height through its coupling to the gas, as discussed in Sec.~\ref{sec:equilibration}.  The next block $m_2$ is removed quickly, and again the system is given time to equilibrate at a new height, and so on.  After all blocks are removed, only the piston remains, and the gas attains a final equilibrium.

In this model, the approach to reversibility arises by imagining a very fine partition, one with many small blocks.\footnote{In his original treatment, Van Ness considers a pile of sand, but we consider a stack of blocks for computational simplicity.} Van Ness argues qualitatively that the corresponding expansion is close to reversible, with the reversible expansion yielding an upper bound on the work done by the gas that can be approached in the limit where the largest block mass tends to zero.  Van Ness's discussion is conceptually illuminating, but he does not mathematically show, starting from a finite-step model, that the work approaches the reversible value.  We fill this gap by proving that the reversible work is a strict upper bound for every finite stepwise expansion, and that the stepwise work approaches this upper bound whenever the mesh of the partition, defined as the largest block mass, tends to zero.  

This convergence result naturally raises a finite-\(N\) optimization question: for fixed \(N\) and fixed total load on the piston, which partition yields the maximum work?  An answer to this question is stated and partially proven by Andresen, Berry, Nitzan, and Salamon \cite{Andresen1977} for an ideal gas.  They show that, for fixed endpoint pressures \(P_0\) and \(P_N\), the work viewed as a function of the equilibrium pressures \(P_0,P_1,\dots,P_N\) has a unique critical point in which the pressures form a geometric progression:
\begin{align}\label{geop}
	P_n = \alpha P_{n-1}, \qquad \alpha = (P_N/P_0)^{1/N}.
\end{align}
Although they suggest this partition is a global maximizer of the work among all partitions of length $N$, they explicitly check only the first-order optimality condition, and global optimality remains a conjecture.  We provide a rigorous, elementary proof of global optimality in Theorem \ref{abns} by transforming the problem into a statement about convex functions.   Relatedly, Anacleto, Ferreira, and Soares \cite{Anacleto2009limit} showed that for both geometric and arithmetic progressions of pressures, the entropy production of the stepwise expansion goes to zero as $N\to\infty$.  Anacleto and Ferreira \cite{Anacleto2009finite} extended these results by showing that for finite $N$, the geometric progression of pressures minimizes entropy production. 

In this paper, we motivate and prove a collection of results that complement and extend these observations.  In Sec.~\ref{prep}, we show that maximizing the stepwise work is equivalent to minimizing a dimensionless height-ratio function \(H\) of the block masses $m_1,\dots,m_N$.  In Sec.~\ref{explore}, we use numerical experiments to explore how \(H\) depends on the order and sizes of the blocks.  These experiments lead to conjectures that are proven in Appendix~\ref{proofs}: for a fixed multiset of block masses, work is maximized by removing the blocks in non-increasing order; for fixed \(N\) and fixed total mass, the unique optimal partition corresponds to a geometric progression of successive equilibrium pressures; and as the mesh (largest block) of the partition tends to zero, the stepwise process approaches the reversible adiabatic value.  Together, these results give a concrete finite-step route to the reversible adiabatic limit and show how optimal finite-$N$ stepwise expansions fit into that limiting process.

\section{Modeling assumptions and preparatory calculations}\label{prep}

We assume the gas in the cylinder is ideal, so when it is in equilibrium, the following relationship holds between its internal energy $U$, absolute pressure $P$, and volume $V$:
\begin{align}\label{upv}
	U = cPV.
\end{align}
Here $c = C_V/(N_\mathrm{gas}k_B) > 0$ is the dimensionless constant-volume heat capacity per particle and $N_\mathrm{gas}$ is the number of particles in the gas. The whole apparatus is in a vacuum, so atmospheric pressure can be neglected.  The piston and masses above it are under the influence of a constant gravitational field of magnitude $g$, so in equilibrium, the force exerted on the gas by the piston is $M_\mathrm{tot} g$, where $M_\mathrm{tot}$ denotes the total mass of the piston plus any mass sitting on it.  

Taking the upward direction along the axis of the cylinder to be positive, the piston experiences a net vertical force $PA - M_\mathrm{tot}g$, where $A$ is the cross-sectional area of the cylinder.  In equilibrium, these forces sum to zero, so $P = M_\mathrm{tot}g/A$. If $h$ denotes the height of the bottom of the piston above the bottom of the cylinder, then $V=Ah$.  Combining these observations with Eq. \eqref{upv} gives an expression for the internal energy of the gas entirely in terms of $c$, $M_\mathrm{tot}$, $g$, and $h$;
\begin{align}\label{uhrel}
    U &= cM_\mathrm{tot}gh.
\end{align} 
All analyses that follow depend on this relationship.  In our model system, the ideal gas assumption coupled with the condition for mechanical equilibrium implies that internal energy of the gas is proportional to the gravitational potential energy of the load compressing it.  In what follows, we use Eq. \eqref{uhrel} and the first law of thermodynamics to compute an expression for the work done during any expansion of the gas starting and ending in equilibrium.  In the next section, we take a slight detour we address a
subtle but important point: how can the piston come to rest at a new equilibrium height
after a block is removed if the piston is frictionless and the gas is ideal?  

\subsection{Equilibration}\label{sec:equilibration}

In our model, the phrase ``frictionless piston''
refers to the absence of mechanical friction at the piston-cylinder interface.  It does
not mean that macroscopic motion of the piston cannot be damped by its coupling
to the gas.  Prior work, using both numerical simulation and kinetic theory, has shown that even for an ideal gas, elastic collisions between the gas and piston can be sufficient to damp the piston and produce relaxation to macroscopic equilibrium.

Cerino et al. \cite[Sec.~3]{Cerino2016} study $N$ particles in a cylinder with a piston subject to a constant external force. Particles collide elastically with the walls and piston, so total mechanical energy of the piston and particles is conserved.  They find macroscopic relaxation of the piston to an equilibrium height even in the case where the gas particles do not interact directly with one another; the moving piston itself mediates energy exchange among the gas particles and causes the piston's oscillations to damp out.  Only thermal fluctuations remain, but those vanish in the thermodynamic limit.  A related study is carried out by Boozer \cite{Boozer2008} for a one-dimensional ideal gas interacting elastically with a piston subject to a constant external force.  He numerically shows both an increase in entropy and the relaxation of the particle momentum distribution toward the Maxwell distribution, signaling thermalization.  Taken together, these results show that elastic collisions between the gas and piston can be sufficient for equilibration, even in the absence of friction, heat flow to the gas's environment, or particle-particle interactions.  The same basic point appears in the closely related adiabatic piston problem in which a movable insulating piston separates two ideal gases.  Studies of that system show that the piston can come to rest in mechanical equilibrium through its interaction with the gases, even without mechanical friction at the piston-cylinder interface \cite{gruber2002,chernov2002,gislason2010}.

Mungan \cite{Mungan2017} gives a complementary macroscopic treatment of damped oscillations of a massive piston in an adiabatic cylinder, using the moving-piston pressure correction of Bauman and Cockerham \cite{bauman1969}, without adding kinetic friction on the piston or internal gas viscosity. The physical mechanism for equilibration arises solely from elastic collisions between the gas particles and the piston.  When the piston moves toward the gas, particles colliding
with it rebound with greater speed than they would from a stationary piston, so the
dynamic pressure on the piston exceeds the bulk pressure.  When the piston moves away
from the gas, particles rebound with reduced speed, and the dynamic pressure falls below the
bulk pressure.  The resulting force opposes the piston's motion and results in a damped oscillation.  The kinetic energy of the piston and load is
redistributed among the microscopic kinetic degrees of freedom of the gas.  No heat
crosses the thermally insulating walls or piston.  Total energy of the gas-piston-load system is conserved,
but it is redistributed among the gas internal (kinetic) energy, piston kinetic energy, and gravitational
potential energy of the piston and remaining load.

In the calculations below we do not model the transient oscillatory relaxation in detail.
We use only the initial and final equilibrium states of each step.  At the beginning and
end of a step the piston is at rest in mechanical equilibrium.  This is what enabled us to
express the gas internal energy in terms of the piston height in Eq.~\eqref{uhrel}.  The
mechanical work done by the gas between equilibrium states is stored as additional
gravitational potential energy of the piston and the remaining load while the internal energy of the gas decreases.  Relative to the reversible expansion between the same initial and final loads, the work not captured as gravitational potential energy remains as additional internal energy of the gas.

\subsection{Work between equilibrium states}

We now turn to computing the work between equilibrium states.  Suppose the gas undergoes a process starting and ending in equilibrium.  Let $U_0$ and $U_f$ be the internal energies of the gas in its initial and final equilibrium states respectively.  Since no heat flows between the gas and its environment, the first law of thermodynamics $U_f - U_0 = Q - W$ implies that the total work $W$ performed by the gas is equal to and opposite the change in its internal energy, so the work expressed as a fraction of the initial internal energy is
\begin{align}\label{wf}
	\frac{W}{U_0} 
	&= \frac{-(U_f - U_0)}{U_0} 
	= 1 - \frac{U_f}{U_0}.
\end{align}
If we now assume, as is the case in the stepwise adiabatic expansion, that in the initial state $M_\mathrm{tot} = M + m$ while in the final state $M_\mathrm{tot} = M$, then upon substituting Eq. \eqref{uhrel} for the initial and final internal energies of the gas, we find that
\begin{align}\label{workfrac}
	\frac{W}{U_0} 
	&= 1 - \frac{cMgh_f}{c(M+m)gh_0} 
	= 1-\frac{M}{M+m}\frac{h_f}{h_0} .
\end{align}
This implies that for fixed $M$ and $m$, the work fraction $W/U_0$ increases when $h_f/h_0$ decreases -- more work is done by the gas when the final height of the piston is smaller.  This makes sense because the more work the gas performs, the less internal energy it keeps, but since according to Eq. \eqref{uhrel} the final internal energy is $cMgh_f$, less internal energy at the end means a smaller height $h_f$.  

Equation \eqref{workfrac} reduces the problem of analyzing the work done by the gas to that of analyzing the ratio $h_f/h_0$.  In the next subsection we derive an expression for $h_f/h_0$ in a stepwise expansion.  In the subsection after that, we use standard thermodynamic arguments about reversible processes to derive a candidate expression for a lower bound on $h_f/h_0$.

\subsection{Height ratio for the stepwise expansion}

Consider the stepwise expansion described in the introduction.  After $n$ blocks are removed, the piston will come to equilibrium at some height $h_n$, and the total remaining mass compressing the gas will be
\begin{align}\label{remain}
	M_n = m_{n+1} + \cdots + m_N + M.
\end{align}
For the expression on the right to be correct for $n = N$, we adopt the notational convention $m_{N+1} + \cdots + m_N = 0$ which gives $M_N = M$ -- when all $N$ blocks have been removed, only the mass of the piston remains.  Just before slice $n$ is removed, the system is in equilibrium, and the internal energy of the gas is $U_{n-1} = cM_{n-1} gh_{n-1}$.  After slice $n$ is removed and the system has relaxed into the next equilibrium, the internal energy of the gas is $U_n = cM_n g h_n$, so the change in its internal energy is
\begin{align}\label{pecn}
	\Delta U_n = U_n - U_{n-1} &= cM_ng h_n - cM_{n-1}g  h_{n-1}.
\end{align}
During this process, the load starts and ends at rest, so the gas has performed work on the piston equal to the change in potential energy of the remaining mass $M_n$.  Since that mass undergoes a height change $h_n - h_{n-1}$, that work is
\begin{align}\label{wn}
	W_n = M_n g(h_n - h_{n-1}).
\end{align}
There is no heat transferred to the gas during this process, so the first law $\Delta U_n = Q_n-W_n$ reduces to $0 = \Delta U_n + W_n$.  Combining this with Eqs. \eqref{pecn} and \eqref{wn} gives
\begin{align}
	0 
	&= (1+c)M_n   gh_n - (cM_{n-1} + M_n)  gh_{n-1}.
\end{align}
Solving for $h_n$ and using the observation $M_{n-1}/M_n = 1 + m_n/M_n$ which follows from the definition \eqref{remain} gives a recursion relation for $h_n$;
\begin{align}\label{recur}
	h_n 
	&= \left(1+\frac{1}{\gamma}\frac{m_n}{M_n}\right)h_{n-1}, \qquad \gamma = \frac{c+1}{c}.
\end{align}
The quantity $\gamma>1$ is the adiabatic index of the gas.  For any real $m>0$ and integer $N > 0$, we define a length-$N$ partition of $m$ to be any sequence $\vec m = (m_1, \dots, m_N)$ of positive real numbers that sum to $m$.  Repeated application of the recursion relation \eqref{recur} gives an expression for the ratio $h_N/h_0$ in terms of a function $\h$ defined on the set of all partitions;
\begin{align}\label{hdef}
	\frac{h_N}{h_0} = \h(\vec m), \qquad \h(\vec m)
	\overset{\mathrm{def}}{=} \prod_{n=1}^N\left(1+\frac{1}{\gamma}\frac{m_n}{m_{n+1} + \cdots + m_N + M}\right).
\end{align}
As in the definition \eqref{remain} for $M_n$, we use the notational convention $m_{N+1} + \cdots + m_N = 0$. We have suppressed dependence of $\h$ on $\gamma$ and $M$ to keep the notation compact.  By virtue of Eq. \eqref{workfrac}, our goal of maximizing the stepwise work is equivalent to minimizing $\h$.  It turns out that understanding how $\h$ behaves as a function of the partition $\vec m$ benefits from an analysis of the reversible expansion starting and ending with the same loads $M+m$ and $M$ of the stepwise expansion.

\subsection{Height ratio for a reversible expansion}

A reversible, adiabatic expansion between an initial pressure-volume state $(P_0, V_0)$ and a final state $(P_\mathrm{rev},V_\mathrm{rev})$ can be mathematically represented by a smooth curve satisfying the differential form of the first law of thermodynamics for an adiabatic process; $dU = -PdV$.  Combining this with the equation of state \eqref{upv} yields the standard relation characterizing that curve; $P_0V_0^\gamma = P_\mathrm{rev}V_\mathrm{rev}^\gamma$.

For a reversible expansion to start and end at the same pressures as the stepwise expansion, it requires $P_0 = (m+M)g/A$ and $P_\mathrm{rev} = Mg/A$.  Moreover the initial and final heights satisfy $V_0 = Ah_0$ and $V_\mathrm{rev} = Ah_\mathrm{rev}$.  Combining these observations and a bit of algebra gives
\begin{align}\label{revr}
	\frac{h_\mathrm{rev}}{h_0} = \h_\mathrm{rev}(m), \qquad \h_\mathrm{rev}(m) \overset{\mathrm{def}}{=}\left(1+\frac{m}{M}\right)^{1/\gamma} .
\end{align}
As with the definition of $\h$, we omit explicitly indicating $\gamma$ and $M$ in the notation for $\h_\mathrm{rev}$ to keep the notation compact.  

\section{Numerical experiments}\label{explore}

To get a sense for how to minimize $\h$ and thus maximize the stepwise work, we do some numerical experiments. For all of our experiments, we take $c = 1/2$ (one spatial dimension), which gives $\gamma = 3$, but all results generalize.  The numerical experiments we perform would be instructive computational exercises for a thermodynamics course, but they could also serve as the basis for a computational physics project focusing on optimization of real-valued functions on high-dimensional spaces.

\subsection{Optimal ordering for fixed $N$}

We start with the following question: how does the order of the blocks in the stack impact $\h$, and thus the stepwise work?  We get a sense for the answer to this question by considering the extreme case $M = 0.01$, $m = 1$, and $\vec m = (0.99, 0.01)$. The load is 100 times the piston mass, and it's split into a pair of blocks, one of which is 99 times the mass of the other.  Definition \eqref{hdef} gives
\begin{align}
	\h(0.99, 0.01)
	&= \left(1 + \frac{1}{3}\frac{0.99}{0.01 + 0.01}\right)\left(1 + \frac{1}{3}\frac{0.01}{0.01}\right) 
	\approx (17.5)(1.3333).
\end{align}
When the more massive block is removed first, the piston height increases by a factor of 17.5.  Then the lighter block is removed, and it increases by a factor of about 1.33.  If the order of the blocks is switched, we get
\begin{align}
	\h(0.01, 0.99)
	&= \left(1 + \frac{1}{3}\frac{0.01}{0.99 + 0.01}\right)\left(1 + \frac{1}{3}\frac{0.99}{0.01}\right) 
	\approx (1.0033)(34).
\end{align}
The lighter block is removed first this time, and this causes the piston height to increase by a factor of 1.0033, then when the heavier block is removed, it increases by a factor of 34.  This is illustrated in Fig.  \ref{fig:small-big}.

\begin{figure}[htbp]
    \centering
    \begin{subfigure}[t]{0.48\textwidth}
    	\vspace{0pt}
        \centering
        \includegraphics[width=0.95\textwidth]{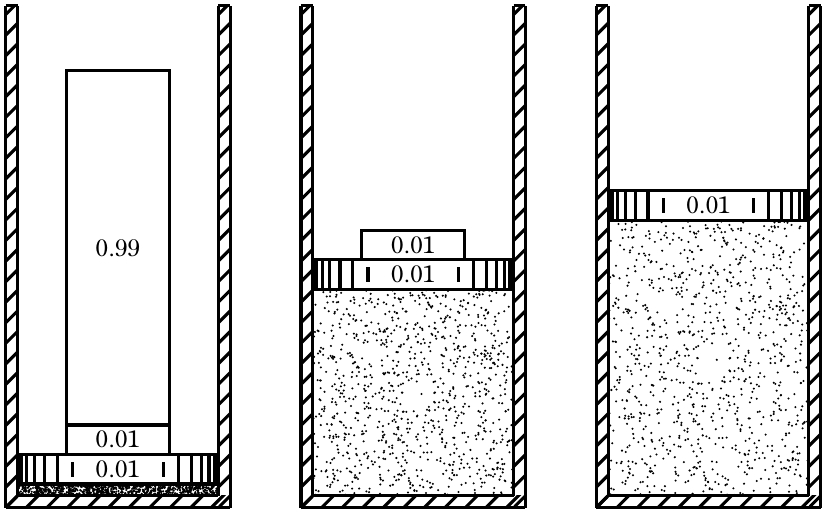}
        \label{fig:left}
    \end{subfigure}
    \hfill
    \begin{subfigure}[t]{0.48\textwidth}
    	\vspace{0pt}
        \centering
        \includegraphics[width=0.95\textwidth]{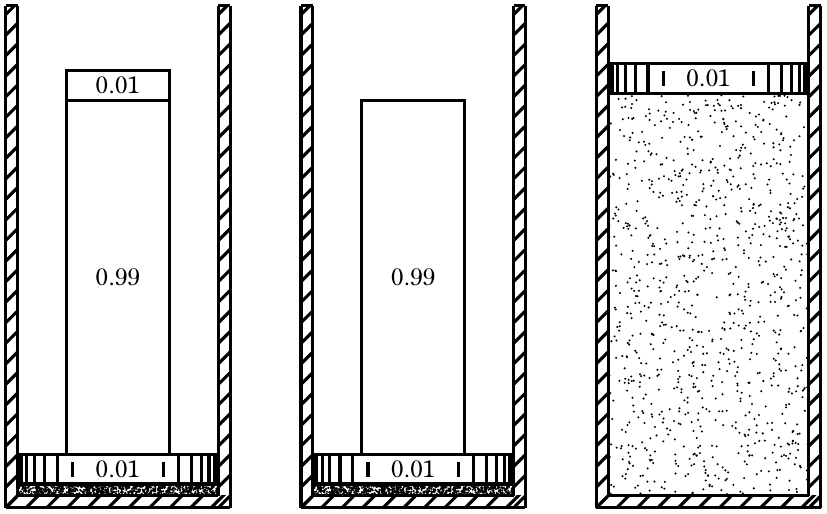}
        \label{fig:right}
    \end{subfigure}
  
    \caption{Stepwise expansions for $M = 0.01$ and the partitions $\vec m = (0.99, 0.01)$ and $\vec m = (0.01, 0.99)$.  Removing the heavier block first results in a smaller value of $\h$ and thus more stepwise work.}
    \label{fig:small-big}
\end{figure}

For both orders, removing the lighter block doesn't cause a substantial expansion because the removed load is not large compared to the remaining load.  But, when the heavier block is removed first, it leaves behind both the mass of the lighter block and the mass of the piston, while when it is removed second, it leaves only the mass of the piston.  This means that the ratio of the block's mass to the mass it leaves behind is about twice as big when it is removed second, and this causes it to increase the height of the piston by a factor of 34 instead of 17.5.  When a large mass is removed far down in the stack, the fractional change in mass is large upon its removal, and the piston height increases by a large factor.  Removing larger masses higher in the stack therefore tends to decrease $\h$ and thus increase the stepwise work.

Another small-$N$ numerical experiment confirms and extends this point.  We let $M=0.1$ and $m=1$, and we consider the length-3 partition $\vec m = (0.1, 0.3, 0.6)$.  We compute $\h$ for all $3! = 6$ permutations of the masses in $\vec m$.  Fig. \ref{perm} displays the results in order of decreasing $\h$.
\begin{figure}[h!]
	\centering\includegraphics[width = 0.95\textwidth]{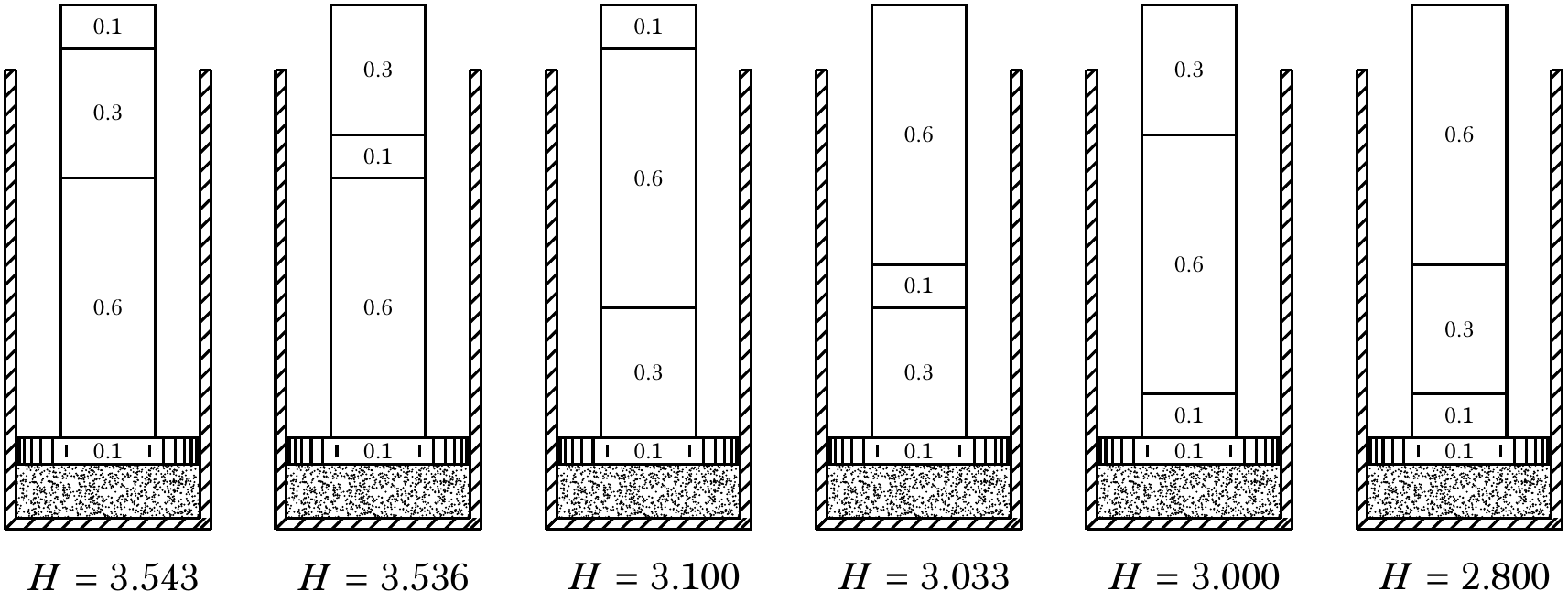}
	\caption{Permuting the blocks in the stack so that they are in non-increasing order of mass from top to bottom decreases $\h$ and therefore maximizes the stepwise work.}
	\label{perm}
\end{figure}
Inspection of these results reveals that the partition with decreasing mass from top to bottom has the smallest value of $\h$ and thus the largest stepwise work.  Moreover, for any two partitions that differ by swapping adjacent blocks, the one with the more massive block higher in the stack has the smaller value of $\h$.  Thermodynamically, these observations make sense.  A non-increasing order of mass helps control the irreversibility introduced by removing large masses later in the stack when this would cause a large fractional decrease in the load on the piston and thus an explosive expansion.  These experiments suggest the following properties which are true in general and are formally proven in Appendix \ref{proofs}.
\begin{property}[Swapping]
	Swapping a pair of adjacent blocks so that the larger mass block is on top leads to a smaller value of $\h$ and thus more stepwise work. (Theorem \ref{swapping})
\end{property}

\begin{property}[Optimal Order]
	Any partition with a non-increasing order of mass from top to bottom minimizes $\h$ and thus maximizes stepwise work. (Theorem \ref{order})
\end{property}

\subsection{Optimal partition for fixed $N$}

For a fixed number of blocks in a partition, re-ordering is not the only way to increase work.  One can also re-distribute mass among the blocks.  This leads to natural questions; if we put no constraints on the distribution or order, is there a partition that minimizes $\h$ and thus maximizes the stepwise work?  If so, is it unique? We again perform numerical experiments to get a sense for the answer.  We consider $M=0.1$, and $m=1$.  For each $N\in\{2,3, 4, 5, 6, 7\}$ we use a numerical optimization routine\footnote{The specific optimization algorithm used here is the sequential least squares programming (SLSQP) option for the \textsf{minimize} method in the SciPy Python library to optimize $\h$ subject to the constraints $m_1 > 0, \dots, m_N > 0$ and $m_1 + \cdots + m_N = m$ that characterize the set of partitions as a subset of $\reals^N$.} to search for a global minimum of $\h$.  The results are shown in Fig. \ref{fig:optimal}.
\begin{figure}[h!]
	\centering\includegraphics[width = 0.95\textwidth]{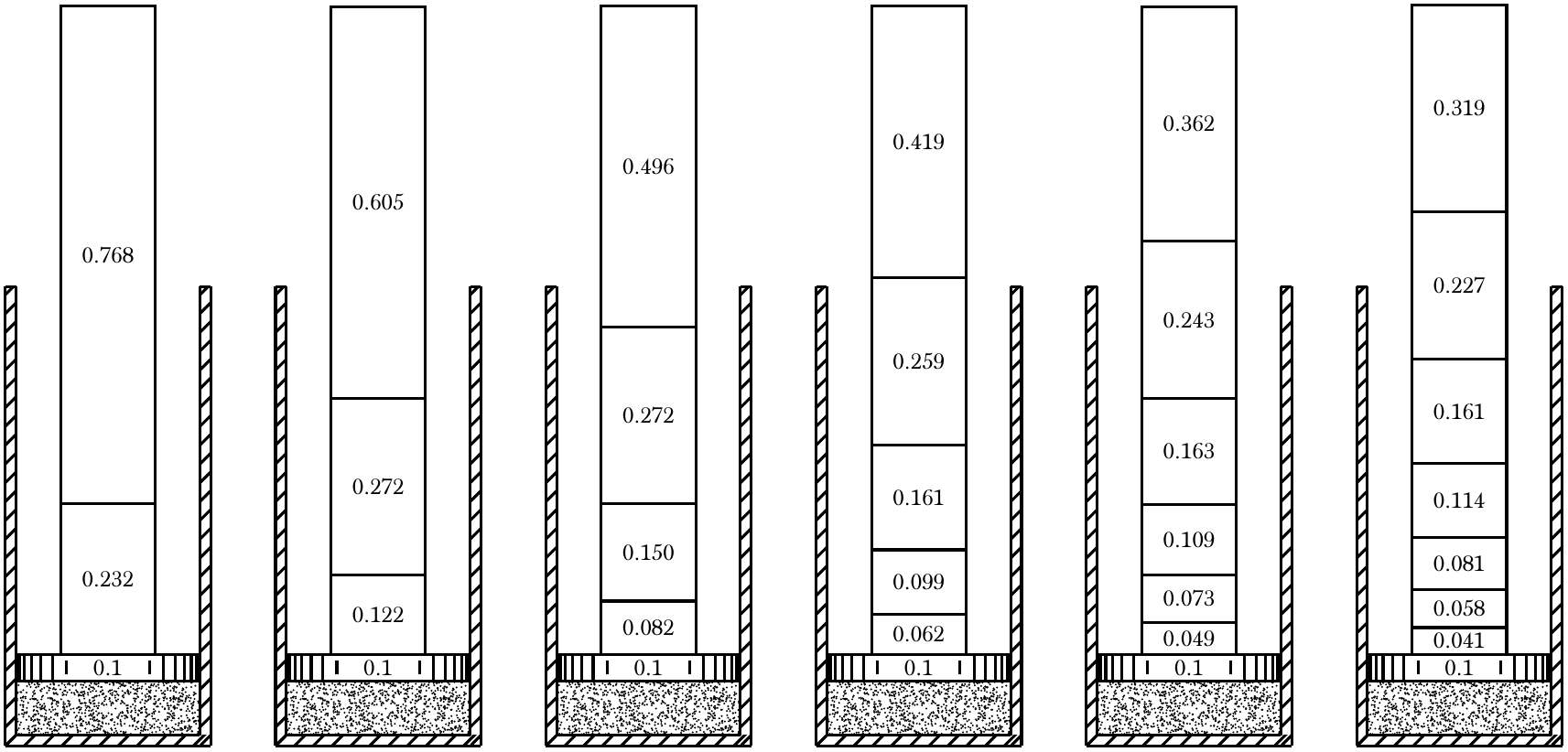}
	\caption{Numerically optimal partitions for $M=0.1$, $m=1$, and $N\in\{2,3, 4, 5, 6, 7\}$.}
	\label{fig:optimal}
\end{figure}
There is a unifying structure hidden in these numbers that was revealed by Andresen et al. who showed that a geometric progression of equilibrium pressures \eqref{geop} is a critical point of the stepwise work, and this implies a geometric progression of masses.  To see why, notice that after removing each block, the remaining mass $M_n$ determines the pressure exerted on the gas via $P_n = M_n g/A$, so $P_n = \alpha P_{n-1}$ implies $M_n = \alpha M_{n-1}$.  But a geometric progression of masses $M_n$ implies a geometric progression of the block masses $m_n$ because by definition \eqref{remain},
\begin{align}
	m_n = M_{n-1} - M_n = \alpha M_{n-2} - \alpha M_{n-1} = \alpha m_{n-1}.
\end{align}
If we return to our numerical experiment depicted in Fig. \ref{fig:optimal}, we can verify that adjacent masses in the stack differ by a constant ratio by computing the list of ratios $(m_1/m_2, \dots, m_{N-1}/m_N)$;
\begin{equation}
\begin{aligned}
	N = 2&: \quad(3.317) \\
	N = 3&: \quad(2.224, 2.224) \\
	N = 4&: \quad(1.821, 1.821, 1.821) \\
	N = 5&: \quad(1.615, 1.615, 1.615, 1.615) \\
	N = 6&: \quad(1.491, 1.491, 1.491, 1.491, 1.491) \\
	N = 7&: \quad(1.409, 1.409, 1.409, 1.409, 1.409, 1.409)
\end{aligned}
\end{equation}
These results reflect the following global optimality result that holds in general and was suggested in \cite{Andresen1977} but not proven:
\begin{property}[Optimal Partition]
	For a given $M$ and $m$, there is a unique global minimizer of $\h$ among all length-$N$ partitions, and thus there is a unique global maximizer of the stepwise work.  This partition consists of a geometric progression of masses corresponding to the pressures in \eqref{geop}. (Theorem \ref{abns})
\end{property}

\subsection{The reversible bound}\label{srz}

In the last section, we determined how one can maximize stepwise work for a given $N$, but now we relax the constraint of fixed $N$ and compare the stepwise work across all partitions of any length.  We start by building on our results in the last section. We compute the value of $\h$ for each of the fixed-$N$ optimal partitions in Fig. \ref{fig:optimal}, and we find the following:
\begin{equation}\label{ndec}
\begin{aligned}
	N = 2&: \quad\h(0.768, 0.232) \approx  3.141\\
	N = 3&: \quad\h(0.605, 0.272, 0.122) \approx  2.791\\
	N = 4&: \quad\h(0.496, 0.272, 0.150, 0.082) \approx  2.632\\
	N = 5&: \quad\h(0.419, 0.259, 0.161, 0.099, 0.062) \approx  2.542\\
	N = 6&: \quad\h(0.362, 0.243, 0.163, 0.109, 0.073, 0.049) \approx 2.484 \\
	N = 7&: \quad\h(0.319, 0.227, 0.161, 0.114, 0.081, 0.058, 0.041) \approx 2.444 
\end{aligned}
\end{equation}
As the number of blocks in the partition increases, the value of $\h$ decreases monotonically, and successive values of $\h$ get closer together.  Are they converging to a limiting value?  To numerically investigate convergence, we'd like to compute $\h$ for the fixed-$N$ optimal partition for a sequence of very large values of $N$.  For this purpose, it would be helpful to have a closed form expression for $\h$ for the optimal partition.  Fortunately, theorem \ref{abns} of Appendix \ref{proofs} gives an explicit expression for the masses in the fixed-$N$ optimal partition, and it results in a simple closed-form expression for $\h$.  Theorem \ref{abns} says the optimal partition for given $N$ is $\vec m^* = (m^*_1, \dots, m^*_N)$, where $\alpha$ satisfy $M_{n-1}^*/M_n^* = 1/\alpha$.  Definitions \eqref{hdef} and \eqref{remain} then give the following after a bit of algebra:
\begin{equation}
\begin{aligned}
	\h(\vec m^*)
	&= \left(1-\frac{1}{\gamma}+\frac{1}{\alpha\gamma}\right)^N.\label{opth}
\end{aligned}
\end{equation}
According to Theorem \ref{abns}, $\alpha = (M/(m+M))^{1/N}$, so if we take $M=0.1$ and $m=1$ to be consistent with the choices that led to \eqref{ndec}, we get
\begin{equation}\label{ndecbig}
\begin{aligned}
	N = 1&: \quad\h(\vec m^*) \approx 4.33333\\
	N = 10&: \quad\h(\vec m^*) \approx  2.37447\\
	N = 100&: \quad\h(\vec m^*) \approx  2.23827\\
	N = 1000&: \quad\h(\vec m^*) \approx  2.22540\\
	N = 10000&: \quad\h(\vec m^*) \approx  2.22412\\
	N = 100000&: \quad\h(\vec m^*) \approx 2.22399 \\
	N = 1000000&: \quad\h(\vec m^*) \approx 2.22398 \\
	N = 10000000&: \quad\h(\vec m^*) \approx 2.22398
\end{aligned}
\end{equation}
As expected from the trend in \eqref{ndec}, the sequence of $\h$ values seems to be converging to something as $N$ grows large, but to what?  When $N$ is large, the fixed-$N$ optimal partition consists of a large number of small-mass blocks.  For such a partition, the piston moves only a very small amount after each block is removed, so the system never strays too far from equilibrium.  We might therefore expect that $\h$ will converge on the value $\h_\mathrm{rev}$ -- the factor by which the height increases in a reversible expansion.  A quick computation using the definition \eqref{revr} confirms this; $\h_\mathrm{rev} = (1+m/M)^{1/\gamma} = (1 + 1/0.1)^{1/3} \approx 2.22398$.  To prove this in general, take the log of both sides of Eq. \eqref{opth}, define $\lambda = \ln(1+m/M)$, and do a bit of algebra with Taylor series to obtain
\begin{align}
	\ln\h(\vec m^*) = \frac{\ln\left(1 + \frac{1}{\gamma}(e^{\lambda/N} -1)\right)}{1/N} = \frac{\lambda}{\gamma} + O(N^{-1})
\end{align}
This implies that $\ln\h(\vec m^*)\to \lambda/\gamma$ as $N\to\infty$ and thus that $\h(\vec m^*)\to (1+m/M)^{1/\gamma}=H_\mathrm{rev}(m)$ as desired.

Is this limiting behavior just a special characteristic of the optimal partition, or is it generally true for any partition with a large number of small masses?  Let's look at a more generic sequence of partitions that becomes finer as $N$ increases.

\begin{figure}[h!]
	\centering\includegraphics[width = 0.95\textwidth]{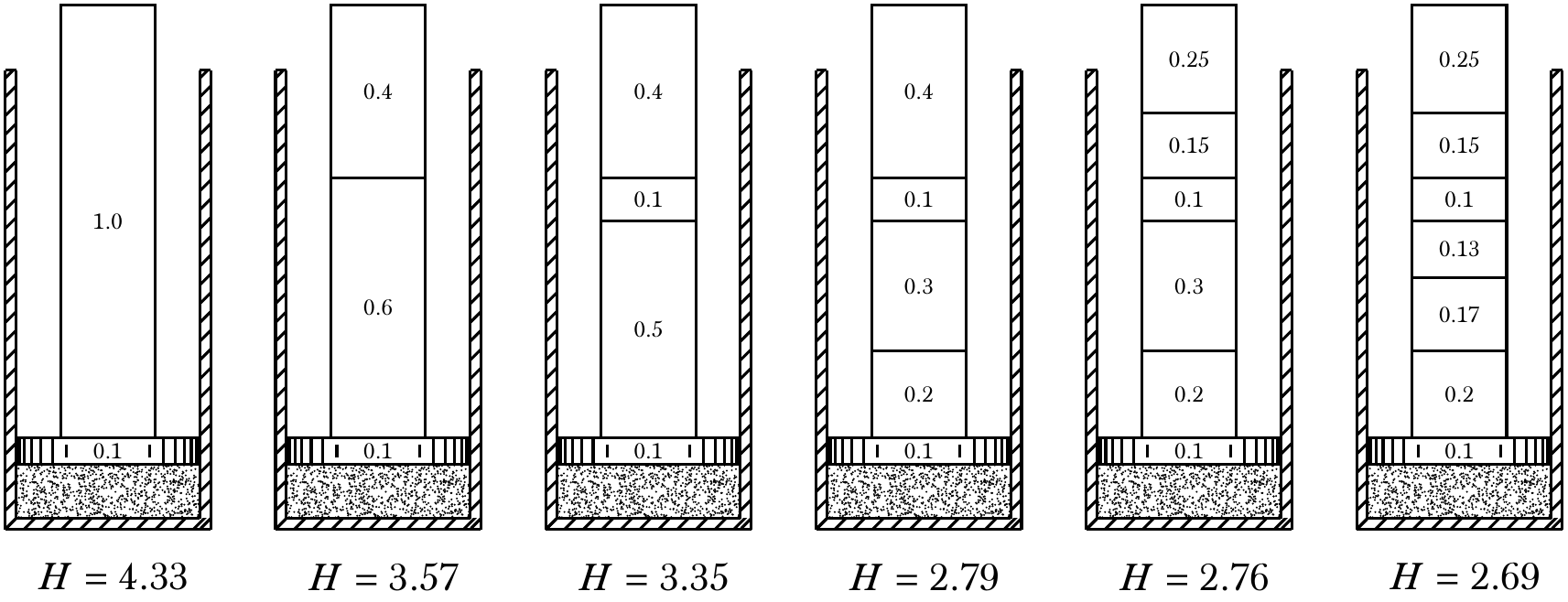}
	\caption{Iterative refinement of a partition with a fixed load $m = 1.0$.  As the partition is refined, the value of $\h$ decreases, indicating an increase in stepwise work.}
	\label{iteref}
\end{figure} 
 
We start with $N=1$ and compute $\h$.  We then iteratively refine the partition by selecting the largest-mass block and slicing it into two (not necessarily equal-mass) pieces.  Doing this by hand five times results in the sequence of partitions and corresponding $\h$ values of Fig. \ref{iteref}.  Each time a block is sliced in two, the value of $\h$ decreases, and the change in $\h$ also decreases. Does this process again converge to the reversible limit for large $N$? Iterating this process 1000 times by using a uniform random number to decide where to slice the largest block at each iteration results in the plot of $\h$ versus $N$ in Fig. 
\ref{scon}.  The plot confirms that slicing a mass in two reduces $\h$ and leads to the conjecture that if a partition has a fine mesh (has a small largest mass), it will have a small value of $\h$.  The plot also confirms the expected convergence to the reversible limit, and suggests that the reversible limit is a strict lower bound on $\h$.  These properties do indeed hold in general;

%\newpage

\begin{figure}[h!]
	\centering
	\begin{adjustbox}{max width=0.95\textwidth}
  		\input{slicing-con.pgf}
	\end{adjustbox}
	\caption{Convergence to the reversible bound through random iterative refinement of a partition.}
	\label{scon}
\end{figure}

\begin{property}[Slicing]
	Refining a partition by slicing one of the masses in two decreases $\h$ and therefore increases the stepwise work. (Theorem \ref{slicing})
\end{property}

\begin{property}[Reversible Bound]
	$\h_\mathrm{rev}$ is a strict lower bound on $\h$ for all partitions $\vec m$, so the reversible adiabatic work strictly upper bounds stepwise adiabatic work. (Theorem \ref{revbound})
\end{property}

\begin{property}[Mesh Bound]
	$\h$ is bounded above by a quantity $\h_\mathrm{mesh}$ depending on the largest mass in the partition, so the stepwise work is bounded below by such a quantity. (Theorem \ref{meshbound})
\end{property}

\begin{property}[Zero Mesh Limit]
	The smaller the mesh (largest mass) of a partition, the closer $\h$ will be to $\h_\mathrm{rev}$, and thus the closer the stepwise work will be to the reversible limit. (Theorem \ref{zeromesh})
\end{property}

These properties all make physical sense.  When a block is sliced in two, the system won't go quite as far out of equilibrium after each mass is removed. The reversible bound property is consistent with the idea that irreversible processes are less efficient at extracting work than the corresponding reversible process, and the mesh bound is consistent with the idea that stepwise processes with smaller steps remain closer to equilibrium and are thus closer to reversible processes.  The reversible bound and mesh bound properties together imply the zero mesh limit property by the squeeze theorem.  

It's instructive to summarize all of these results in a single plot.
\begin{figure}[h!]
	\centering
	\begin{adjustbox}{max width=0.95\textwidth}
  		\input{rdonew.pgf}
	\end{adjustbox}
	\caption{For $\gamma = 3$, $M = 0.1$ and $m = 1$, the plot depicts the value of $\h(\vec m)$ versus $N$ for a number of different partitions $\vec m$.  Dashed lines are not data points but are included so trends are visually clear.}
	\label{rdo}
\end{figure}
For each $N$ in an approximately geometric progression from $N=2$ to $N=237$ (this gives approximately equally-spaced points on a log scale), we generate a random length-$N$ partition of $m$.  This gives a sequence of random partitions $\vec m_1, \vec m_2, \vec m_3, \dots, \vec m_N, \dots$.  For each $\vec m_N$ in the sequence, we compute the upper bound $\h_\mathrm{mesh}(\vec m_N)$ and $\h(\vec m_N)$.  For small $N$, the partition tends to contain large blocks, so $\h(\vec m_N)$ is large, and $\h_\mathrm{mesh}(\vec m_N)$ which is exponentially large (see Theorem \eqref{meshbound}) in the mesh (largest mass) $\mesh(\vec m)$.  We then permute the elements of $\vec m_N$ to obtain a descending-order partition $\vec m_N^\mathrm{desc}$ and compute $\h(\vec m_N^\mathrm{desc})$.  Since a descending order minimizes $\h$ for a given set of blocks, we expect this to decrease the value of $\h$.  Next, we compute the optimal partition $\vec m_N^*$ of Theorem \ref{abns} and $\h(\vec m_N^*)$.  This should yield an even smaller value of $\h$ since it's optimal among all finite-$N$ partitions.  Lastly, we compute the reversible bound $\h_\mathrm{rev}(m)$ which should be the lowest of all since it is a lower bound that can only be attained in the limit that the largest mass goes to zero.  We plot all of these five values of $\h$ versus $N$.  This results in Fig. \ref{rdo}.  At every $N$ the plot is consistent with the expected sequence of inequalities:
\begin{align}
	\h_\mathrm{mesh}(\vec m_N) > \h(\vec m_N) \geq \h(\vec m_N^\mathrm{desc}) \geq \h(\vec m_N^*) > \h_\mathrm{rev}(m).
\end{align}
As $N$ grows large, partitions whose mesh tends to zero converge to the reversible bound\footnote{If the largest mass in a partition remains large as $N\to\infty$, this convergence won't happen.  The randomly chosen partitions converge on the plot because for $N$ large, there is a low probability of picking a partition where any of the masses is large.} because in that limit, the upper bound $\h_\mathrm{mesh}$ converges to $\h_\mathrm{rev}$, so since all partitions have values of $\h$ between these two, their values get squeezed down to $\h_\mathrm{rev}$.

\section{Conclusion}

The purpose of this paper is to make the reversible adiabatic limit explicit in a simple finite-step model.  In this model, an ideal gas expands by lifting a piston as a load is removed in discrete chunks.  Each finite sequence of removals is irreversible, but the collection of all such finite-step processes gives a concrete way to ask how the reversible adiabatic value is approached.  The key simplification is that, for equilibrium states of the gas-piston-load system, the internal energy of the gas can be written as \(U=cM_{\mathrm{tot}}gh\).  This reduces the work calculation to the study of the final height ratio \(h_N/h_0=H(\vec m)\).  Since maximizing the work is equivalent to minimizing \(H\), the thermodynamic question becomes a finite-dimensional optimization problem over partitions of the load.

Numerical experiments reveal three main patterns, each of which is proved in the Appendix.  First, for a fixed collection of block masses, the gas does the most work when the blocks are removed in non-increasing order of mass.  Physically, this prevents a large block from being removed late in the process, when it would produce a large fractional drop in the remaining load and therefore a highly irreversible expansion.  Second, for fixed \(N\) and fixed total load, the unique globally optimal partition is the one for which the successive equilibrium pressures form a geometric progression.  This proves the global optimality of the pressure progression identified by Andresen et al. as a critical point.  Third, the reversible adiabatic expansion gives a strict upper bound on the work from every finite stepwise expansion, and this bound is approached whenever the mesh of the partition, the largest block mass, tends to zero.  This is consistent with the broader principle that refining irreversible step processes can reduce dissipation \cite{salamon2023}.

Figure~\ref{rdo} summarizes these results.  The random curve represents the value of $H$ from a typical finite partition with no attempt to optimize either the order or the block sizes.  The descending-order curve uses the same blocks but puts them in the best order, so it isolates the gain obtained purely from reordering.  The optimal curve represents the best possible \(N\)-block partition and shows the additional gain obtained by choosing the block sizes themselves.  The reversible curve is the limiting value that no finite stepwise process attains, but which can be approached by sufficiently fine partitions.  The mesh-bound curve is different in character: it is not a physical protocol, but an analytic upper bound on \(H\) that depends on the largest block.  It therefore emphasizes that the relevant condition for approaching reversibility is not merely that \(N\) be large, but that no individual step remain large.  When the mesh tends to zero, the mesh bound converges to \(H_{\mathrm{rev}}\), and the stepwise values are squeezed to the reversible limit.

This model also gives useful computational exercises for advanced students.  Starting from the recursion for \(H(\vec m)\), students can numerically discover that heavier blocks should be removed first, that the optimal finite-\(N\) partition is geometric, and that refinement of a partition drives the work toward the reversible value.  The subsequent proofs then show how these numerical observations follow from elementary but nontrivial mathematical ideas: sorting via adjacent swaps, convexity, Jensen's inequality, and the squeeze theorem.

As a natural extension of this work, one could compare the work-based optimization used here with entropy production optimization.  Moreover, the same questions could be asked for other equations of state.  These variants would preserve the central pedagogical theme: reversible thermodynamic processes need not remain abstract idealizations, they can be understood as limits of explicit irreversible processes whose finite-step structure is itself rich and instructive.

\appendix

\section{Proofs of general properties}\label{proofs}

For all proofs, we assume without loss of physical generality that $\gamma>1$, $m>0$, $M>0$, $N$ is a positive integer, and $\vec m$ is a partition of $m$.  

\begin{theorem}[Swapping]\label{swapping}
	Let $N$ be the length of $\vec m$, and choose $k\in\{1, \dots, N-1\}$. Let $\vec m^\mathrm{swap}$ be the re-ordering of $\vec m$ with $m_k$ and $m_{k+1}$ swapped;
	\begin{align}
		\vec m^\mathrm{swap}
		&= (m_1, \dots, m_{k-1}, m_{k+1}, m_k, m_{k+2}, \dots, m_N).
	\end{align}
	If $m_k < m_{k+1}$, then
	\begin{align}
		\h(\vec m) > \h(\vec m^\mathrm{swap}).
	\end{align}
\end{theorem}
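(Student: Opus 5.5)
Swapping $m_k$ and $m_{k+1}$ changes only the $k$-th and $(k+1)$-th factors of the product defining $\h$ in Eq. \eqref{hdef}. For $n<k$ the remaining mass $M_n$ is unchanged, since it is a sum containing both $m_k$ and $m_{k+1}$. For $n>k+1$ neither $m_n$ nor $M_n$ involves the swapped blocks. So we set $R = m_{k+2}+\cdots+m_N+M>0$, the mass below the pair, and write $a=m_k$, $b=m_{k+1}$ with $a<b$. Because every factor is positive, the inequality $\h(\vec m)>\h(\vec m^\mathrm{swap})$ reduces to comparing two two-factor products:
\begin{align}
	F(a,b) \overset{\mathrm{def}}{=} \left(1+\frac{1}{\gamma}\frac{a}{b+R}\right)\left(1+\frac{1}{\gamma}\frac{b}{R}\right) \quad\text{vs.}\quad F(b,a).
\end{align}
We need to show $F(a,b)>F(b,a)$ whenever $a<b$.

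The first step is to rewrite each factor over a common denominator. This gives
\begin{align}
	F(a,b)=\frac{(\gamma(b+R)+a)(\gamma R+b)}{\gamma^2 (b+R)R}, \qquad F(b,a)=\frac{(\gamma(a+R)+b)(\gamma R+a)}{\gamma^2 (a+R)R}.
\end{align}
Cross-multiplying by the positive quantity $\gamma^2 R(a+R)(b+R)$, the claim becomes
\begin{align}
	(a+R)\,(\gamma(b+R)+a)(\gamma R+b) > (b+R)\,(\gamma(a+R)+b)(\gamma R+a).
\end{align}

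The second step is to expand both sides and take their difference. Every term symmetric in $a$ and $b$ cancels, so the difference $D$ is antisymmetric in $(a,b)$. It must therefore factor as $(b-a)$ times a symmetric polynomial in $a$, $b$, $R$, $\gamma$. We then show this cofactor is positive, using $\gamma>1$ and the positivity of $a$, $b$, $R$.

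The main obstacle is the bookkeeping of the expansion and checking the sign of the cofactor. A quick test at $R\to 0$ gives a leading difference of $(\gamma-1)ab(b-a)$, and the terms of order $R$ and $R^2$ should come with factors such as $(\gamma-1)$, $\gamma(\gamma-1)$, or $\gamma^2$ multiplying $(b-a)$. I expect the result to be $D=(b-a)\,Q$ with $Q>0$, where $\gamma>1$ enters exactly through the $(\gamma-1)$ coefficients. If the direct expansion becomes messy, I would organize it by powers of $R$ and check the sign of each coefficient separately.
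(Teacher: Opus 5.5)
Your reduction to the two swapped factors and the cross-multiplication are the paper's argument; the paper computes $g(A,B)-g(B,A)=\frac{AB(\gamma-1)(B-A)}{\gamma^2C(A+C)(B+C)}$ directly, which is your $D$ divided by $\gamma^2R(a+R)(b+R)$. Carrying out your expansion, every term involving $R$ cancels and $D=(\gamma-1)ab(b-a)$ exactly, since your $R\to 0$ ``leading term'' is the entire difference and not just its first coefficient, so positivity is immediate and you should write this computation out rather than leave the sign of the cofactor as an expectation.
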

\begin{proof}
	Inspecting definition \eqref{hdef}, we find that $\h(\vec m)$ and $\h(\vec m^\mathrm{swap})$ differ in only the factors with $n=k$ and $n=k+1$. If we take their ratio, all other factors cancel, and using the shorthand notation of \eqref{remain}, and the further shorthands $A = m_k$, $B = m_{k+1}$, and $C = M_{k+1}$, we obtain
	\begin{align}
		\frac{\h(\vec m)}{\h(\vec m^\mathrm{swap})}
		&= \frac{\left(1 + \frac{1}{\gamma}\frac{A}{B + C}\right)\left(1 + \frac{1}{\gamma}\frac{B}{C}\right)}{\left(1 + \frac{1}{\gamma}\frac{B}{A + C}\right)\left(1 + \frac{1}{\gamma}\frac{A}{C}\right)}
		=\frac{g(A,B)}{g(B,A)}
	\end{align}
	where the last equality defines the function $g$.  After some tedious algebra, the difference between the numerator and denominator is found to be
	\begin{align}
		g(A,B) - g(B,A)
		&= \frac{AB\,(\gamma-1)}{\gamma^2\,C\,(A+C)\,(B+C)}\cdot (B-A).
	\end{align}
	Since $A$, $B$, and $C$ are positive, $\gamma > 1$, and $A < B$, the expression on the right is positive, so $g(A,B) > g(B, A)$ which gives the desired inequality.
\end{proof}

If $\vec m$ is a length-$N$ partition and $\sigma$ is a permutation of the set $\{1, \dots, N\}$, then we define $\vec m_\sigma = (m_{\sigma(1)}, \dots, m_{\sigma(N)})$.

\begin{theorem}[Optimal Order]\label{order}
    Let $N$ be the length of $\vec m$.  Let $\sigma$ be any permutation of $\{1, \dots, N\}$ for which
	\begin{align}
		m_{\sigma(1)} \geq m_{\sigma(2)} \geq\cdots \geq m_{\sigma(N)}
	\end{align}
	then
	\begin{align}
		\h(\vec m) \geq \h(\vec m_{\sigma}).
	\end{align}
\end{theorem}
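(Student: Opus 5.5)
The plan is to sort $\vec m$ into non-increasing order by a sequence of adjacent swaps, as in bubble sort, and to apply Theorem \ref{swapping} at each swap that actually changes the order. Equal adjacent masses need a separate remark. Throughout, $\h$ depends on a partition only through the ordered sequence of its entries, so two orderings with the same entry sequence have the same value of $\h$.

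First I observe that $\h(\vec m_\sigma)$ is the same for every permutation $\sigma$ that sorts $\vec m$ in non-increasing order. Any two such $\sigma$ give the identical sequence $(m_{\sigma(1)},\dots,m_{\sigma(N)})$, since ties only permute equal values. So it suffices to show that $\h(\vec m)\geq \h(\vec m^{\downarrow})$, where $\vec m^{\downarrow}$ denotes the non-increasing rearrangement.

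The main step is an induction on the number of inversions
\[
I(\vec m)=\#\{(i,j): i<j,\ m_i<m_j\}.
\]
If $I(\vec m)=0$, then $\vec m$ is already non-increasing, so $\vec m=\vec m^{\downarrow}$ and equality holds. If $I(\vec m)>0$, there is some adjacent index $k$ with $m_k<m_{k+1}$. Otherwise the sequence would be non-increasing, and then no inversion could exist, since $m_i\ge m_{i+1}\ge\cdots\ge m_j$ for every $i<j$.

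Swapping positions $k$ and $k+1$ gives $\vec m^{\mathrm{swap}}$. This swap removes exactly the inversion $(k,k+1)$ and leaves every other pair's status unchanged, because for any other index the relative order of it with $m_k$ and with $m_{k+1}$ is preserved. Hence $I(\vec m^{\mathrm{swap}})=I(\vec m)-1$. Theorem \ref{swapping} gives $\h(\vec m)>\h(\vec m^{\mathrm{swap}})$, and the induction hypothesis applied to $\vec m^{\mathrm{swap}}$, which has the same non-increasing rearrangement, gives $\h(\vec m^{\mathrm{swap}})\ge \h(\vec m^{\downarrow})$. Chaining the two inequalities completes the induction. As a byproduct, the inequality is strict whenever $\vec m$ is not already non-increasing.

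There is no real obstacle here. The only points needing care are the bookkeeping that an adjacent swap of a strictly increasing pair lowers the inversion count by exactly one, and the handling of ties so that the statement holds for every sorting permutation $\sigma$ and not just one. The analytic content is entirely supplied by Theorem \ref{swapping}.
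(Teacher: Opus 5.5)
Your proof is correct and is essentially the paper's argument: sort by adjacent swaps of strictly increasing pairs, apply Theorem~\ref{swapping} at each swap, and note that every non-increasing rearrangement is the same vector. The paper invokes bubble sort for this step, and your induction on the inversion count is simply a more explicit version of that bookkeeping, which also gives strict inequality whenever $\vec m$ is not already non-increasing.
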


\begin{proof}
Applying the standard bubble sort algorithm to $\vec m$ yields a permutation $\tau=\sigma_1\circ\cdots\circ\sigma_j$
such that $\vec m_\tau$ is non-increasing and each $\sigma_i$ is an adjacent transposition
that swaps an adjacent pair
$m_k<m_{k+1}$ (so $\h$ decreases by Theorem~\ref{swapping}). Since $\vec m_\tau=\vec m_\sigma$
as vectors (they differ only by permuting equal masses), $\h(\vec m_\tau)=\h(\vec m_\sigma)$.
Repeatedly applying Theorem~\ref{swapping} therefore gives
\[
\h(\vec m)\ge \h(\vec m_{\sigma_1})\ge \h(\vec m_{\sigma_1\circ\sigma_2})\ge \cdots \ge
\h(\vec m_{\sigma_1\circ\cdots\circ\sigma_j})
= \h(\vec m_\tau)=\h(\vec m_\sigma),
\]
as desired.
\end{proof}

 For the results that follow, we use the following notation for the set of all length-$N$ partitions of total mass $m$  
\begin{align}\label{parset}
		 \partition_{m, N} = \{\vec m\in\reals^N \mid m_1 > 0, \dots, m_N > 0, \quad m_1 + \cdots + m_N = m\}.
	\end{align}

\begin{theorem}[Optimal Partition]\label{abns}
	Let $\vec m^*$ be the length-$N$ partition defined as follows:
	\begin{align}\label{geostack}
	 m^*_n = \alpha  m^*_{n-1}, \qquad  m^*_1 = (1-\alpha)(m+M), \qquad \alpha = \left(\frac{M}{m+M}\right)^{1/N}
	\end{align}
	If $\vec m\in\partition_{m,N}$ and $\vec m\neq\vec m^*$ then
	\begin{align}
		\h(\vec m) > \h(\vec m^*).
	\end{align}
\end{theorem}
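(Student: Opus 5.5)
Change variables from the block masses to the ratios of successive remaining loads. For a partition $\vec m\in\partition_{m,N}$ put $M_n = m_{n+1}+\cdots+m_N+M$ as in \eqref{remain}, so $M_0 = m+M$ and $M_N = M$, and define $r_n = M_{n-1}/M_n > 1$ for $n=1,\dots,N$. The map $\vec m\mapsto (r_1,\dots,r_N)$ is a bijection from $\partition_{m,N}$ onto
\[
\{(r_1,\dots,r_N)\mid r_n>1,\ r_1 r_2\cdots r_N = 1+m/M\},
\]
because $M_n = M\,r_{n+1}\cdots r_N$ and $m_n = M_{n-1}-M_n$ recover the partition. Since $m_n/M_n = r_n - 1$, definition \eqref{hdef} becomes
\[
\h(\vec m) = \prod_{n=1}^N\left(1+\frac{r_n-1}{\gamma}\right).
\]
Now set $x_n = \ln r_n > 0$, so the constraint reads $\sum_n x_n = \lambda = \ln(1+m/M)$. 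Then
\[
\ln \h(\vec m) = \sum_{n=1}^N f(x_n), \qquad f(x) = \ln\!\left(1-\frac1\gamma+\frac{e^x}{\gamma}\right).
\]

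The core step is to show that $f$ is strictly convex on $\reals$. Write $a = 1-1/\gamma > 0$ and $b = 1/\gamma > 0$. We have $f'(x) = b e^x/(a+be^x)$, and therefore
\[
f''(x) = \frac{ab\,e^x}{(a+be^x)^2} > 0,
\]
which holds because $\gamma>1$ makes $a>0$. This is where the hypothesis $\gamma>1$ enters. With strict convexity in hand, Jensen's inequality gives
\[
\sum_n f(x_n) \ge N f(\lambda/N),
\]
with equality if and only if all $x_n$ are equal to $\lambda/N$. Exponentiating yields $\h(\vec m)\ge \h$ evaluated at the configuration where every $r_n$ equals $(1+m/M)^{1/N} = 1/\alpha$. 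Equality holds only at that configuration, and every other partition gives strict inequality.

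It remains to identify this equal-ratio configuration with $\vec m^*$ from \eqref{geostack}. From $M_n = \alpha M_{n-1}$ we get $M_n = \alpha^n(m+M)$, which is consistent with $M_N = M$ by the choice of $\alpha$. Hence $m_n = M_{n-1}-M_n = (1-\alpha)\alpha^{n-1}(m+M)$. This gives $m_1 = (1-\alpha)(m+M)$ and $m_n = \alpha m_{n-1}$, exactly the definition \eqref{geostack}. Since $0<\alpha<1$, all these masses are positive and they sum to $m$, so $\vec m^*\in\partition_{m,N}$.

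Injectivity of the change of variables then converts the uniqueness of the Jensen equality case into the statement that $\vec m\neq\vec m^*$ implies $\h(\vec m)>\h(\vec m^*)$. As a byproduct, the minimum value is $(a+b/\alpha)^N$, which agrees with \eqref{opth}.

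The main obstacle is finding the substitution. In the original variables $\h$ is neither convex nor separable, and the coupled denominators $m_{n+1}+\cdots+m_N+M$ obstruct any direct argument. Passing first to the ratios $r_n$ and then to their logarithms $x_n$ decouples the product and turns the constraint into a linear one. After that, checking the sign of $f''$ and the bijectivity of the change of variables are routine.
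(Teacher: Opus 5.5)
Your proposal is correct and follows essentially the same route as the paper. The paper also passes to $x_n=\ln(M_{n-1}/M_n)$ and shows this map is a bijection onto $\{x_n>0,\ \sum_n x_n=-N\ln\alpha\}$. It then writes $\ln\h=\sum_n\phi(x_n)$ with the same $\phi$ and concludes from strict convexity and the equality case of Jensen's inequality. Your detour through the ratios $r_n$ and your explicit check that $\vec m^*\in\partition_{m,N}$ are only cosmetic differences.
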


\begin{proof}
	Let $\scl(\vec m) = \ln\h(\vec m)$.  Since the exponential function is strictly increasing, it suffices to show that for all $\vec m\in\partition_{m,N}$ with $\vec m\neq \vec m^*$,
	\begin{align}
		\scl(\vec m) > \scl(\vec m^*).
	\end{align}
	We do this by composing $\scl$ with an invertible mapping that converts it into a strictly convex function.  Let $f$ be the function that maps each $\vec m\in\partition_{m,N}$ to $\vec x\in\reals^N$ as follows: 
	\begin{align}\label{xdef}
		x_n = \ln\left(\frac{m_n + \cdots + m_N + M}{m_{n+1} + \cdots + m_N + M}\right).
	\end{align}
	Since each $m_n > 0$, each $x_n > 0$.  Moreover, using the notation of Eq. \eqref{remain}, we get $\alpha = (M_N/M_0)^{1/N}$ which implies that $\alpha^N = M_N/M_0$, so since $x_n = \ln(M_{n-1}/M_n)$, we get 
	\begin{align}\label{xsum}
		\sum_{n=1}^N x_n
		= \ln\left(\frac{M_0}{M_1}\frac{M_1}{M_2}\cdots \frac{M_{N-1}}{M_N}\right)
		=\ln\left(\frac{M_0}{M_N}\right)
		= -N\ln \alpha.
	\end{align}
	These observations imply that $f$ maps $\partition_{m,N}$ into the following  set.
	\begin{align}
		\mathscr X_{m,N} 
		= \left\{\vec x\in\reals^N\mid x_1>0,\cdots, x_N>0, \quad x_1 + \cdots + x_N = -N\ln\alpha\right\}.
	\end{align}
	We now show that $f:\partition_{m,N} \to\mathscr X_{m,N}$ is a bijection.  Its definition \eqref{xdef} implies the recursion
	\begin{align}
		M_{n-1} 
		&= e^{x_n} M_n 
		= e^{x_n}e^{x_{n+1}}M_{n+1}
		\cdots
		 = e^{x_n}e^{x_{n+1}}\cdots e^{x_N} M_N
		 = e^{x_n + \cdots + x_N} M.
	\end{align}
	and thus
	\begin{align}\label{mfx}
		m_n 
		= M_{n-1} - M_n
		= (e^{x_n + \cdots + x_N} - e^{x_{n+1} + \cdots + x_N}) M.
	\end{align}
	For any $\vec x\in \mathscr X_{m,N}$ this equation gives a $\vec m$ with $m_n > 0$ and $m_1 + \cdots + m_N = M_0 - M_N = m$, so it defines a mapping $g:\mathscr X_{m,N}\to\partition_{m,N}$.  Some algebra confirms that $g\circ f$ is the identity mapping on $\partition_{m,N}$ while $f\circ g$ is the identity mapping on $\mathscr X_{m,N}$, so $f$ is a bijection of $\partition_{m, N}$ onto $\mathscr X_{m,N}$ with $f^{-1} = g$.  Now define $\bar\scl:\mathscr X_{m,N}\to\reals$ by $\bar\scl = \scl\circ f^{-1}$, then unraveling definitions gives 
	\begin{align}
		\bar\scl(\vec x) = \sum_{n=1}^N \phi(x_n), \qquad \phi(x) = \ln\left(1-\frac{1}{\gamma} + \frac{e^{x}}{\gamma}\right).
	\end{align}
	Explicit computation shows that for all $x\in\reals$,
	\begin{align}
		\phi''(x)
		= \frac{(\gamma - 1)e^{x}}{(\gamma-1 + e^{x})^2} > 0,
	\end{align}
	so $\phi$ is strictly convex on the real line.  Jensen's inequality then implies that for any $\vec x\in\mathscr X_{m,N}$,
	\begin{align}
		\frac{1}{N}\bar\scl(\vec x) = \frac{1}{N}\sum_{n=1}^N \phi(x_n) \geq \phi\left(\sum_{n=1}^N \frac{x_n}{N}\right),
	\end{align}
	with equality if and only if $x_1 = x_2 = \cdots = x_N$.  In light of \eqref{xsum}, the condition that the components of $\vec x$ are all equal is equivalent to $Nx_n = -N\ln\alpha$.  Thus if we define $x_n^* = -\ln\alpha$, then for any $\vec x\in\mathscr X_{m,N}$ with $\vec x \neq \vec x^*$, we have
	\begin{align}
		\bar\scl(\vec x) > N\phi\left(\sum_{n=1}^N \frac{x_n}{N}\right) =\bar\scl(\vec x^*).
	\end{align}
	If we define $\vec m^* = f^{-1}(\vec x^*)$, we obtain
	\begin{align}\label{elli}
		\scl(f^{-1}(\vec x)) = \bar\scl(\vec x) > \bar\scl(\vec x^*) = \scl(f^{-1}(\vec x^*)) = \scl(\vec m^*).
	\end{align}
	Now consider $\vec m\in\partition_{m,N}$ with $\vec m\neq \vec m^*$, then bijectivity of $f$ implies that there is an $ \vec x \neq \vec  x^*$ with $\vec x = f(\vec m)$, and thus \eqref{elli} gives
	\begin{align}
		\scl(\vec m) > \scl(\vec m^*).
	\end{align}
	So $\vec m^* = f^{-1}(\vec x^*)$ is the unique global minimizer of $\scl$ on $\partition_{m,N}$.  But letting $M_n^* = m^*_{n+1} + \cdots + m^*_N + M$, we have $\ln(M_{n-1}^*/M_n^*) = x_n^* =  -\ln\alpha$, so $M_{n-1}^*/M_n^* = 1/\alpha$, and thus $M_n^* = \alpha^n M_0 = \alpha^n(m+M)$ and
	\begin{align}
		m_n^* 
		= M_{n-1}^* - M_n^*
		= (\alpha^{n-1} - \alpha^n)(m+M)
		= \alpha^{n-1}(1-\alpha)(m+M).
	\end{align}
	This implies the desired recursion \eqref{geostack}.	 
	 
\end{proof}

The partition $\vec m^*$ of the last theorem is precisely the mass partition whose associated equilibrium pressures satisfy \eqref{geop} of \cite{Andresen1977}.  To confirm this, we compute the sequence of equilibrium pressures of the stepwise expansion corresponding to $\vec m^*$ and verify that they satisfy \eqref{geop}.  The pressures $P_0, P_1, \dots, P_N$ are   $P_n = M_n g/A$, so 
\begin{align}\label{alpha}
	\alpha 
	= \left(\frac{M}{m+M}\right)^{1/N}
	= \left(\frac{M_N}{M_0}\right)^{1/N}
	= \left(\frac{M_Ng/A}{M_0g/A}\right)^{1/N}
	= \left(\frac{P_N}{P_0}\right)^{1/N},
\end{align}
and since $M_n^*  = \alpha M_{n-1}^*$ for the partition  $\vec m^*$, we get $P_n = \alpha P_{n-1}$.  These are precisely the conditions in \eqref{geop}.

\begin{theorem}[Slicing]\label{slicing}
	Let $N$ be the length of $\vec m$, choose $k\in\{1, \dots, N\}$, and let $A$ and $B$ be positive real numbers satisfying $A + B = m_k$.  If 
	\begin{align}
		\vec m^\mathrm{sliced} = (m_1, \dots, m_{k-1}, A, B, m_{k+1}, \dots, m_N)
	\end{align}
	then
	\begin{align}
		\h(\vec m) > \h(\vec m^\mathrm{sliced}).
	\end{align}

\end{theorem}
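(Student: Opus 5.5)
The plan is to argue locally: slicing the $k$-th block leaves every other factor of $\h$ unchanged, so only a two-factor comparison is needed. In the product \eqref{hdef}, the factor for index $n<k$ depends on $m_n$ and the remaining mass $M_n$. The remaining mass is unchanged because $A+B=m_k$. Factors with $n>k$ in $\vec m$ correspond to factors with index $n+1$ in $\vec m^\mathrm{sliced}$ and are identical. So, writing $C=M_k=m_{k+1}+\cdots+m_N+M>0$, the ratio reduces to
\begin{align}
	\frac{\h(\vec m)}{\h(\vec m^\mathrm{sliced})}
	= \frac{1+\frac{1}{\gamma}\frac{A+B}{C}}{\left(1+\frac{1}{\gamma}\frac{A}{B+C}\right)\left(1+\frac{1}{\gamma}\frac{B}{C}\right)},
\end{align}
and it suffices to show that the numerator exceeds the denominator.

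To do this I would clear denominators by multiplying both sides by $\gamma^2 C(B+C)>0$. The left side becomes $\gamma(B+C)\bigl(\gamma C + A+B\bigr)$. The right side becomes $\bigl(\gamma(B+C)+A\bigr)\bigl(\gamma C+B\bigr)$. Expanding, the left is $\gamma^2 C(B+C) + \gamma(A+B)(B+C)$. The right is $\gamma^2C(B+C) + \gamma B(B+C) + \gamma AC + AB$. The difference left minus right is
\begin{align}
	\gamma A(B+C) - \gamma AC - AB = AB(\gamma-1),
\end{align}
which is strictly positive since $A,B>0$ and $\gamma>1$. This gives $\h(\vec m)>\h(\vec m^\mathrm{sliced})$.

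An alternative route, as a cross-check, uses the logarithmic variables of Theorem~\ref{abns}. Slicing replaces one $x_k=\ln((A+B+C)/C)$ by the two values $y=\ln((A+B+C)/(B+C))$ and $z=\ln((B+C)/C)$ with $y+z=x_k$. The claim is then $\phi(y)+\phi(z)<\phi(y+z)$ for $y,z>0$. This is superadditivity of the strictly convex $\phi$ with $\phi(0)=0$, and it follows from convexity. The main obstacle is only the bookkeeping of indices, namely confirming that the factors after position $k$ match exactly. The algebra is elementary, and I would present the direct computation as the proof.
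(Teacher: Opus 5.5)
Your proof is correct and follows essentially the same route as the paper: reduce to the ratio of the single $k$th factor against the two replacement factors with $C=M_k$, and show the excess is $AB(\gamma-1)>0$. The paper writes this as the ratio equalling $1+\frac{AB(\gamma-1)}{(\gamma(B+C)+A)(\gamma C+B)}$, and your convexity cross-check via superadditivity of $\phi$ (using $\phi(0)=0$) is also valid and ties the result to the proof of Theorem~\ref{abns}.
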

\begin{proof}
	The only difference between $\h(\vec m)$ and $\h(\vec m^\mathrm{sliced})$ is that the $k^\mathrm{th}$ factor in $\h(\vec m)$ gets split into a product of factors involving $A$ and $B$.  If we take the ratio of $\h(\vec m)$ and $\h(\vec m^\mathrm{sliced})$, all other common factors cancel, so using the shorthand $C = M_k$, we get  
	\begin{align}
		\frac{\h(\vec m)}{\h(\vec m^\mathrm{sliced})}
		&= \frac{\left(1+\frac{1}{\gamma}\frac{A + B}{C}\right)}{\left(1+\frac{1}{\gamma}\frac{A}{B + C}\right)\left(1+\frac{1}{\gamma}\frac{B}{C}\right)} 
	\end{align}
	Some tedious but straightforward algebra then gives
	\begin{align}
		\h(\vec m)
		&= \left(1 + \frac{AB(\gamma-1)}{\left(\gamma(B+C)+A\right)\left(\gamma C+B\right)}\right) \h(\vec m^\mathrm{sliced})
	\end{align}
	Since $\gamma > 1$, and $A$, $B$, and $C$ are positive, the expression in the parentheses on the right is larger than 1, and this gives the desired result.
\end{proof}

For the next theorem, it's useful to explicitly indicate the mass $M$ of the piston in the notations for $\h$ and $\h_\mathrm{rev}$, so instead of $\h(\vec m)$ and $\h_\mathrm{rev}(m)$, we write $\h(\vec m, M)$ and $\h_\mathrm{rev}(m, M)$.

\begin{theorem}[Reversible Bound]\label{revbound}
	\begin{align}
	\h(\vec m, M) > \h_\mathrm{rev}(m, M)
\end{align}
\end{theorem}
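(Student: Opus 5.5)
Rather than using the slicing or optimal-partition results, I would prove the bound factor by factor. The key observation is that $\h_\mathrm{rev}$ factors telescopically over the steps of any partition. With the notation $M_n$ of \eqref{remain}, we have $M_0 = m+M$ and $M_N = M$, so
\begin{align}
	\h_\mathrm{rev}(m,M) = \left(\frac{M_0}{M_N}\right)^{1/\gamma} = \prod_{n=1}^N \left(\frac{M_{n-1}}{M_n}\right)^{1/\gamma} = \prod_{n=1}^N \left(1+\frac{m_n}{M_n}\right)^{1/\gamma}.
\end{align}
Each factor here is exactly $\h_\mathrm{rev}(m_n, M_n)$, the reversible height ratio for removing only block $n$ from a load resting on mass $M_n$. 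Likewise, the $n$-th factor of $\h(\vec m, M)$ in \eqref{hdef} is $\h((m_n), M_n)$, the one-block stepwise ratio. It therefore suffices to prove the one-block inequality
\begin{align}
	1+\frac{t}{\gamma} > (1+t)^{1/\gamma} \qquad \text{for all } t>0,
\end{align}
applied with $t = m_n/M_n > 0$. Multiplying the $N$ strict inequalities between positive numbers then gives $\h(\vec m, M) > \h_\mathrm{rev}(m, M)$.

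The one-block inequality is Bernoulli's inequality in its strict form with exponent $1/\gamma\in(0,1)$. I would prove it directly. Set $\psi(t) = 1 + t/\gamma - (1+t)^{1/\gamma}$. Then $\psi(0)=0$ and $\psi'(t) = \frac{1}{\gamma}\bigl(1-(1+t)^{1/\gamma - 1}\bigr)$. Since $1/\gamma - 1 < 0$ and $1+t>1$, we get $\psi'(t) > 0$ for $t>0$. Hence $\psi(t)>0$ for $t>0$.

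An equivalent route is concavity: $t\mapsto (1+t)^{1/\gamma}$ is strictly concave because $\gamma>1$, so it lies strictly below its tangent line at $t=0$, and that tangent line is $1+t/\gamma$.

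There is no real obstacle here. The only point needing care is recognizing the telescoping factorization of $\h_\mathrm{rev}$ across the intermediate loads $M_n$, which reduces the claim to the $N=1$ case. Strictness comes from $m_n>0$ and $\gamma>1$. Physically, the argument says that each single irreversible step does strictly less work than the reversible expansion between the same two loads.
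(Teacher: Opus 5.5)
Your proof is correct and is essentially the paper's argument. The paper inducts on $N$, peeling off the last block via $\h(\vec m, M) = \h((m_1,\dots,m_k), m_{k+1}+M)\,\h((m_{k+1}), M)$ and applying the same tangent-line (strict Bernoulli) inequality as its base case, which is just the unrolled form of your direct factor-by-factor comparison against the telescoped product for $\h_\mathrm{rev}$ (the same telescoping identity the paper later uses in the Mesh Bound proof).
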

\begin{proof}
Let $N$ be the length of $\vec m$.  We proceed by induction on $N$. The base case $N=1$ is
\begin{align}\label{base}
	1+\frac{1}{\gamma} \frac{m}{M} > \left(1+\frac{m}{M}\right)^{1/\gamma}
\end{align}
Since $0 < 1/\gamma < 1$ and $m/M>0$, this inequality follows from the inequality $1+\beta x > (1+x)^\beta$ for $x>0$ and $0<\beta<1$.  This inequality follows from the fact that $(1+x)^\beta$ is strictly concave with $1 + \beta x$ its tangent line at $x=0$.  Now assume the bound is true for $N=k$, then for $N = k+1$ we have
\begin{equation}
\begin{aligned}
	\h&((m_1, \dots, m_{k+1}), M) \\
	&= \prod_{n=1}^{k+1}\left(1+\frac{1}{\gamma}\frac{m_n}{m_{n+1} + \cdots + m_{k+1} + M}\right) \\
	&= \prod_{n=1}^k\left(1+\frac{1}{\gamma}\frac{m_n}{m_{n+1} + \cdots + m_k + (m_{k+1} + M)}\right)\left(1+\frac{1}{\gamma}\frac{m_{k+1}}{M}\right) \\
	&= \h((m_1, \dots, m_k), m_{k+1} + M)\h((m_{k+1}), M)
\end{aligned}
\end{equation}
If we apply the base case $N=1$ to the factor $\h((m_{k+1}), M)$ and the inductive hypothesis about the case $N=k$ to the factor $\h((m_1, \dots, m_k), m_{k+1} + M)$, we get
\begin{equation}
\begin{aligned}
	\h((m_1, \dots, m_{k+1}), M) 
	&> \h_\mathrm{rev}(m_1 + \cdots + m_k, m_{k+1} + M)\h_\mathrm{rev}(m_{k+1}, M)\\
	&= \left(1+\frac{m_1 + \cdots +m_k}{m_{k+1} + M}\right)^{1/\gamma}\left(1+\frac{m_{k+1}}{M}\right)^{1/\gamma} \label{bprod}\\
	&= \left(\frac{m_1 + \cdots + m_{k+1} + M}{m_{k+1} + M}\cdot\frac{m_{k+1} + M}{M}\right)^{1/\gamma} \\
	&= \left(1+\frac{m_1 + \cdots + m_{k+1}}{M} \right)^{1/\gamma} \\
	&= \h_\mathrm{rev}(m_1 + \cdots + m_{k+1}, M)
\end{aligned}
\end{equation}
which is the desired inequality.
\end{proof}
For our last two theorems, we define the mesh of a partition $\vec m = (m_1, \dots, m_N)$ as the mass of the largest slice;
\begin{align}
	\mesh(\vec m) = \max\{m_1, \dots, m_N\}.
\end{align}

\begin{theorem}[Mesh Bound]\label{meshbound}
	If we define
	\begin{align}
		\h_\mathrm{mesh}(\vec m) = \exp\left(\frac{1}{2\gamma}\frac{m}{M} \frac{\mesh(\vec m)}{M}\right)\cdot \h_\mathrm{rev}(m)
	\end{align}
	then
	\begin{align}
		\h(\vec m) < \h_\mathrm{mesh}(\vec m).
	\end{align}
\end{theorem}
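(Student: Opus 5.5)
The plan is to work with logarithms and compare $\ln\h(\vec m)$ with $\ln\h_\mathrm{rev}(m)$ term by term, as in the proof of Theorem \ref{revbound}. Write $M_n = m_{n+1}+\cdots+m_N+M$ as in \eqref{remain}, and set $u_n = m_n/M_n > 0$, so that $M_{n-1}/M_n = 1+u_n$. Definition \eqref{hdef} then reads
\begin{align}
	\ln\h(\vec m) = \sum_{n=1}^N \ln\left(1+\frac{u_n}{\gamma}\right).
\end{align}
Because $M_0 = m+M$ and $M_N = M$, the reversible value telescopes exactly as in \eqref{xsum}:
\begin{align}
	\ln\h_\mathrm{rev}(m) = \frac{1}{\gamma}\ln\frac{M_0}{M_N} = \frac{1}{\gamma}\sum_{n=1}^N \ln(1+u_n).
\end{align}
So it suffices to show that the difference of these two sums is strictly less than $\frac{1}{2\gamma}\frac{m}{M}\frac{\mesh(\vec m)}{M}$, and then exponentiate.

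The first step bounds each stepwise factor with the strict inequality $\ln(1+y) < y$ for $y>0$. This gives $\ln(1+u_n/\gamma) < u_n/\gamma$, and hence
\begin{align}
	\ln\h(\vec m) - \ln\h_\mathrm{rev}(m) < \frac{1}{\gamma}\sum_{n=1}^N \bigl(u_n - \ln(1+u_n)\bigr).
\end{align}
The second step uses the elementary bound $u - \ln(1+u) \le u^2/2$ for $u \ge 0$. It holds because the difference $u^2/2 - u + \ln(1+u)$ vanishes at $0$ and has derivative $u - 1 + 1/(1+u) = u^2/(1+u) \ge 0$. The right side is therefore at most $\frac{1}{2\gamma}\sum_n u_n^2$.

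The remaining step, which is the only place the mesh enters, controls $\sum_n u_n^2$. Since $M_n \ge M$, each $u_n \le m_n/M \le \mesh(\vec m)/M$. We write $u_n^2 \le \frac{\mesh(\vec m)}{M}\cdot\frac{m_n}{M}$ and sum, using $\sum_n m_n = m$, to get
\begin{align}
	\sum_{n=1}^N u_n^2 \le \frac{\mesh(\vec m)}{M}\cdot\frac{m}{M}.
\end{align}
Combining the three steps yields $\ln\h(\vec m) < \ln\h_\mathrm{rev}(m) + \frac{1}{2\gamma}\frac{m}{M}\frac{\mesh(\vec m)}{M}$, and exponentiating gives $\h(\vec m) < \h_\mathrm{mesh}(\vec m)$.

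I do not expect a serious obstacle. The one point needing care is arranging the estimates so that exactly the quadratic remainder $u_n - \ln(1+u_n)$ survives after telescoping the reversible term, and then splitting $u_n^2$ as (mesh)$\times$(mass) rather than bounding it crudely by $N\,\mathrm{mesh}^2$. Strictness comes from the first step alone, since $m_n>0$.
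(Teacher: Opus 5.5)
Your proposal is correct and follows the paper's proof essentially step for step. It uses the same telescoping of $\sum_n \ln(1+u_n)$ to $\ln(1+m/M)$, the same bounds $\ln(1+x)<x$ and $\ln(1+x)\ge x-x^2/2$ to reduce the remainder to $\frac{1}{2\gamma}\sum_n u_n^2$, and the same splitting $u_n^2\le \frac{\mesh(\vec m)}{M}\cdot\frac{m_n}{M}$ before exponentiating; the only cosmetic difference is that you verify the quadratic bound with a derivative argument instead of citing it as standard.
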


\begin{proof}
	The idea behind this proof is to take the logarithm of $\h(\vec m)$, prove an inequality for the resulting expression, and then exponentiate both sides.  If we define $\rho_n = m_n/M_n$ then
	\begin{align}\label{lrem}
		\ln \h(\vec m) 
		&= \sum_{n=1}^{N} \ln\left(1 + \frac{\rho_n}{\gamma}\right) 
		= \frac{1}{\gamma}\sum_{n=1}^N\ln\left(1+\rho_n\right) + R_N,
	\end{align}
	where we have defined
	\begin{align}\label{remainder}
		R_N = \sum_{n=1}^N\ln\left(1+\frac{\rho_n}{\gamma}\right) - \frac{1}{\gamma}\sum_{n=1}^N\ln\left(1+\rho_n\right).
	\end{align}
	But $1+\rho_n = M_{n-1}/M_n$ by definition \eqref{remain}, so since $M_0 = m+M$ and $M_N = M$, we get
	\begin{equation}
	\begin{aligned}
		\sum_{n=1}^{N} \ln(1+\rho_n)
		%&= \ln\left(\prod_{n=1}^{N} (1+\rho_n)\right)
		&= \ln\left(\frac{M_0} {M_1}\frac{M_1}{M_2}\cdots \frac{M_{N-1}}{M_N}\right) 
		%&= \ln\left(\frac{M_0}{M_N}\right)
		%= \ln\left(\frac{m+M}{M}\right) 
		= \ln\left(1+\frac{m}{M}\right). \label{lsum}
	\end{aligned}
	\end{equation}
	Combining \eqref{lrem} and \eqref{lsum} gives
	\begin{equation}
	\begin{aligned}
		\ln \h(\vec m) 
		&= \frac{1}{\gamma}\ln\left(1+\frac{m}{M}\right) + R_N.
	\end{aligned}
	\end{equation}
	We now bound $R_N$ from above which also allows us to bound $\h(\vec m)$ from above. Since $\rho_n$ and $\gamma$ are positive, the standard inequalities $x - x^2/2 < \ln(1+x) < x$ for $x>0$ imply that
	\begin{align}
		\ln\left(1 + \frac{\rho_n}{\gamma}\right) < \frac{\rho_n}{\gamma}, \qquad -\frac{1}{\gamma}\ln(1+\rho_n) < -\frac{1}{\gamma}\left(\rho_n - \frac{(\rho_n)^2}{2}\right).
	\end{align}
	Using these inequalities in \eqref{remainder}, and noting that $\rho_n =m_n/M_n\leq m_n/M$ and $m_n \leq \mesh(\vec m)$, we get
	\begin{equation}
	\begin{aligned}
		R_N
		&< \frac{1}{2\gamma}\sum_{n=1}^{N} (\rho_n)^2 
		\leq \frac{1}{2\gamma }\sum_{n=1}^{N} \left(\frac{m_n}{M}\right)^2 
		\leq \frac{1}{2\gamma M^2}\sum_{n=1}^N \left(\mesh(\vec m) \cdot m_n\right) \\
		&= \frac{1}{2\gamma M^2}\cdot \mesh(\vec m)\cdot \sum_{n=1}^N m_n 
		= \frac{m}{2\gamma M^2}\mesh(\vec m).
	\end{aligned}
	\end{equation}
	Putting this all together gives
	\begin{align}
		\ln \h(\vec m) < \frac{1}{\gamma}\ln\left(1+\frac{m}{M}\right) + \frac{m}{2\gamma M^2}\mesh(\vec m).
	\end{align}
	Finally, we exponentiate both sides of this inequality and invoke strict monotonicity of the exponential and definition \eqref{revr} to obtain the desired inequality.
\end{proof}

\begin{theorem}[Zero Mesh Limit]\label{zeromesh}
  Let $\vec m_1, \vec m_2, \dots $ be a sequence of partitions of $m$ for which $\mesh(\vec m_j)\to 0$ as $j\to\infty$, then
	\begin{align}
		\lim_{j\to\infty} \h(\vec m_j) = \h_\mathrm{rev}(m).
	\end{align}
\end{theorem}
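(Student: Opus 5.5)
The plan is a squeeze argument combining Theorem~\ref{revbound} and Theorem~\ref{meshbound}. Each $\vec m_j$ is a partition of the same total load $m$ with the same piston mass $M$, so both bounds apply with the same $m$ and $M$ for every $j$. For each $j$ this gives
\begin{align}
	\h_\mathrm{rev}(m) < \h(\vec m_j) < \exp\left(\frac{1}{2\gamma}\frac{m}{M}\frac{\mesh(\vec m_j)}{M}\right)\h_\mathrm{rev}(m).
\end{align}
The lengths of the $\vec m_j$ may vary with $j$, and in fact must tend to infinity since $N_j\,\mesh(\vec m_j)\ge m$. This causes no difficulty, because neither bound depends on the length $N$.

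Next I take the limit of the upper bound. Since $\mesh(\vec m_j)\to 0$, the exponent $\frac{m}{2\gamma M^2}\mesh(\vec m_j)$ tends to $0$. By continuity of the exponential, the prefactor tends to $1$, so the upper bound converges to $\h_\mathrm{rev}(m)$. The lower bound is the constant $\h_\mathrm{rev}(m)$, so the squeeze theorem gives $\lim_{j\to\infty}\h(\vec m_j)=\h_\mathrm{rev}(m)$.

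For a quantitative version, I would divide through by $\h_\mathrm{rev}(m)$ and use $e^t-1\le te^t$. This yields
\begin{align}
	0<\h(\vec m_j)-\h_\mathrm{rev}(m)\le C\,\mesh(\vec m_j)
\end{align}
for $j$ large, with $C$ depending only on $m$, $M$, $\gamma$. This gives an explicit $\varepsilon$--$J$ argument if one prefers to avoid citing the squeeze theorem.

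There is no genuine obstacle here, since all the analytic work is in the two earlier theorems. The only point to check is that Theorem~\ref{meshbound} holds for partitions of arbitrary length with a constant uniform in $N$, which its proof confirms: the bound $\sum_n m_n^2\le \mesh(\vec m)\,m$ is length-independent.
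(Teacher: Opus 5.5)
Your proposal is correct and matches the paper's proof. Like the paper, you sandwich $\h(\vec m_j)$ between the reversible bound (Theorem~\ref{revbound}) and the mesh bound (Theorem~\ref{meshbound}), let the exponential prefactor tend to $1$, and apply the squeeze theorem. Your extra remarks on length-independence and the explicit rate $C\,\mesh(\vec m_j)$ are valid but not needed.
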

\begin{proof}
	Theorem \ref{revbound} bounds $\h(\vec m_j)$ from below, and Theorem \ref{meshbound} bounds it from above.  Putting these together gives
	\begin{align}
		\h_\mathrm{rev}(m) 
		< 
		\h(\vec m_j) 
		<
		\exp\left(\frac{1}{2\gamma}\frac{m}{M} \frac{\mesh(\vec m_j)}{M}\right)\cdot \h_\mathrm{rev}(m).
	\end{align}
	for every $\vec m_j$ in the sequence.  As $j\to\infty$, the mesh of $\vec m_j$ goes to zero by assumption, so since $\gamma$, $m$, and $M$ are fixed, the exponential factor approaches $1$, so the right-hand side approaches $\h_\mathrm{rev}(m)$.  The squeeze theorem then gives the desired result.
\end{proof}

\section*{Author declarations} The author has no conflicts to disclose.

\bibliographystyle{unsrt}
\bibliography{refs}

\end{document}